\documentclass[aps,pra,onecolumn,10pt,longbibliography,amsmath,amssymb]{revtex4-2}
\usepackage[T1]{fontenc}
\usepackage{lmodern,microtype}
\usepackage{amsthm,mathtools}
\usepackage{graphicx,booktabs,array,enumitem,xcolor,placeins}
\definecolor{linkblue}{HTML}{17324D}
\usepackage[colorlinks=true,allcolors=linkblue]{hyperref}
\hypersetup{pdftitle={Simultaneous Circuit Tests for Finite Reversible Models of Quantum Control},pdfauthor={Josef Bruzzese}}
\usepackage{tikz}
\usetikzlibrary{arrows.meta,positioning}
\setlist{nosep,leftmargin=*}
\newtheorem{theorem}{Theorem}[section]
\newtheorem{proposition}[theorem]{Proposition}
\newtheorem{lemma}[theorem]{Lemma}

\theoremstyle{definition}
\theoremstyle{remark}
\newcommand{\ket}[1]{\lvert #1\rangle}
\newcommand{\bra}[1]{\langle #1\rvert}
\newcommand{\tr}{\operatorname{tr}}
\newcommand{\diag}{\operatorname{diag}}
\newcommand{\op}{\mathrm{op}}
\newcommand{\id}{\mathrm{id}}
\newcommand{\W}{\mathcal W}
\newcommand{\Q}{\mathcal Q}
\newcommand{\Pp}{\mathcal P}
\newcommand{\C}{\mathcal C}
\newcommand{\eps}{\varepsilon}

\begin{document}

\title{Simultaneous Circuit Tests for Finite Reversible\texorpdfstring{\\}{ }
Models of Quantum Control}
\author{Josef Bruzzese}
\noaffiliation
\date{3 October 2026}

\begin{abstract}
A finite reversible model can approximate individual quantum gates while retaining freedom in its transition tables and preparation contexts. We formulate the simultaneous certification problem in which one stationary controller and one preparation law must reproduce an entire collection of circuit probabilities. For a specified frame catalogue, accuracy graph and calibration constraints, we give an exact shared-table feasibility formulation and distinguish certificates for a fixed table from certificates for the complete model class. A four-frame qubit example exhibits a strict separation between fitting two circuits individually and fitting them jointly. A larger exact benchmark uses 464 frames and a phase symmetry that removes preparation-prior dependence. We independently reconstruct its admissible graph and certify a sharp repeated-circuit horizon: one controller stays within probability error $1/3$ through 192 blocks, whereas every allowed controller violates that threshold by block 193. A separate controller reproduces all 8,190 nonempty Hadamard--phase words of length at most twelve within the same tolerance. We also derive a finite-window spectral obstruction and finite-shot confidence tests, illustrated by a reproducible sampling experiment. The conclusions concern the stated finite reversible model classes; they do not establish a universal memory bound or a discrepancy for physical quantum hardware.
\end{abstract}

\maketitle

\section{Introduction}
A finite representation of quantum control must specify more than a collection of admissible states. It must determine how commands act, how their inverses are implemented, what information preparation retains, and how an observable probability is obtained. Even when every selected step is close to its nominal quantum gate, several different reversible transition tables may meet the same accuracy and calibration requirements. Their long circuit histories need not agree.

This paper studies the resulting certification problem. A model is selected once and used throughout a family of experiments. Each experiment prepares the same physical state, applies a chosen gate word, and performs a terminal measurement. There are no intermediate measurements. The question is whether one controller and one preparation law can reproduce all the resulting quantum probabilities within a prescribed tolerance. Allowing an independently selected model for each circuit changes this question and can remove a discrepancy that is unavoidable for a shared model.

Finite-state and rational-grid proposals provide one motivation for considering such controllers~\cite{Palmer2025,Palmer2026}. The mathematical problem does not require a particular interpretation of those proposals, an ultrametric on state labels, a fundamental time step, or a physical information bound. We work with explicit finite frame catalogues, permutations approximating nominal commands, and a specified terminal probability rule. Frame labels remain part of the model even when distinct labels describe the same physical ray.

There are three relevant bodies of prior work. Permutation approximation of measure-preserving dynamics supplies the general construction behind reversible discretisation~\cite{Lax1971}; the matching condition is classical~\cite{Hall1935}. Memory costs of quantum contextuality concern finite simulators of sequential measurement statistics~\cite{Kleinmann2011,Trandafir2025}. Gate set tomography instead characterises quantum operations jointly from circuit data, including preparation and measurement~\cite{Nielsen2021}. The present model fixes the preparation and readout conventions and asks whether a constrained finite permutation model is compatible with a collection of probabilities. It does not perform calibration-free quantum process tomography or prove a general contextuality inequality. Circuit selection to reduce the experimental cost of gate set tomography is also studied in Ref.~\cite{Rudinger2023}.

Preparation and transformation contextuality are formulated in terms of operational equivalences~\cite{Spekkens2005}. Here the context labels specify an explicit preparation protocol; their presence alone does not establish contextuality in that operational sense.

The finite-frame constructions and repeated-control benchmark originate in a longer research archive~\cite{BruzzeseArchive}. Here the central object is the simultaneous circuit radius. We give its exact finite feasibility formulation, an analytic example of strict individual-versus-joint separation, a fresh computational reconstruction of the local benchmark, and a finite-shot interpretation. The computational package accompanying this paper contains the geometry, witness tables, verification code and figure generation needed for the results reported below. We do not rely on unavailable execution logs from the earlier archive.

The main conceptual distinction is between an \emph{existence certificate}, which exhibits a model fitting a collection, and an \emph{exclusion certificate}, which eliminates every model in the stated class. A good fit for one table is not an exclusion result; a failure of several selected tables is not an exclusion result either. Calibration, reversibility, symmetry and stationarity specify the class over which the quantifiers must range.

\section{Finite reversible frame models}\label{sec:model}
\subsection{Commands, frames and preparation}
Fix a Hilbert space $\mathbb C^d$, a finite label set $X$ of size $M$, and frame representatives $F_i\in U(d)$ for $i\in X$. The map $i\mapsto[F_i]$ need not be injective. A command alphabet $\mathcal G$ has nominal unitaries $U_g$. A stationary controller assigns one permutation $\pi_g\in S_X$ to each command. The same permutation is used whenever that command appears.

A word $w=(g_1,\ldots,g_\ell)$ is written in chronological order. Its ideal unitary and label update are
\begin{equation}
 U_w=U_{g_\ell}\cdots U_{g_1},\qquad
 \pi_w=\pi_{g_\ell}\circ\cdots\circ\pi_{g_1}.
 \label{eq:word}
\end{equation}
If inverse commands are included, their tables are fixed to $\pi_{g^{-1}}=\pi_g^{-1}$. A forward command with a self-adjoint nominal matrix need not itself be an involution. Imposing that extra identity defines a smaller model class. Reversibility here means that the retained label determines a unique predecessor under a specified command, consistent with logical reversibility~\cite{Bennett1973}. It does not by itself specify a thermodynamic implementation.

Let $\rho$ be the fixed target preparation and $0\leq E\leq I$ the terminal effect. A finite context set $K$ specifies initial labels $a_k$ and a preparation distribution $\eta\in\Pp\subseteq\Delta(K)$, where $\Pp$ is a fixed compact polytope. In context $k$, prepare the compensated seed
\begin{equation}
 \rho_k=F_{a_k}^{\dagger}\rho F_{a_k}.
 \label{eq:seed}
\end{equation}
Every branch then represents the same initial physical state. Draw the context once per run, retain it through the word, and use the same distribution for every circuit in the experiment. It is independent of the choice of circuit.

For a fixed controller $\pi$, the context and ensemble probabilities are
\begin{align}
 Q^\pi_{wk}
 &=\tr\!\left[E F_{\pi_w(a_k)}\rho_kF_{\pi_w(a_k)}^\dagger\right],
 \label{eq:context-probability}\\
 q_{\pi,\eta}(w)&=\sum_k\eta_k Q^\pi_{wk},\qquad
 p(w)=\tr(EU_w\rho U_w^\dagger).
 \label{eq:probabilities}
\end{align}
All preparations in this paper are fixed and explicitly given. The register bound $M$ counts evolving frame labels. It does not include the description of the transition tables, arbitrary precision in real-valued probabilities, or an unrestricted continuum of preparations. For a finite set of preparations, the compensated seed set is also finite.

\begin{proposition}[Consistency of compensated preparation]\label{prop:channel}
For a fixed word and preparation distribution, the ensemble rule extends to a random-unitary quantum channel
\[
 \mathcal E_w(\rho)=\sum_k\eta_k W_{wk}\rho W_{wk}^\dagger,
 \qquad W_{wk}=F_{\pi_w(a_k)}F_{a_k}^\dagger.
\]
Appending the paired inverse word returns every branch exactly to its initial physical state.
\end{proposition}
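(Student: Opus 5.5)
The plan is a direct computation from the definitions in Eqs.~\eqref{eq:seed}--\eqref{eq:probabilities}, in three steps.

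\emph{Step 1: rewrite each context probability.} Substituting the compensated seed $\rho_k=F_{a_k}^\dagger\rho F_{a_k}$ into Eq.~\eqref{eq:context-probability} gives
\[
 Q^\pi_{wk}=\tr\!\left[E\,F_{\pi_w(a_k)}F_{a_k}^\dagger\,\rho\,F_{a_k}F_{\pi_w(a_k)}^\dagger\right]=\tr\!\left[E\,W_{wk}\rho W_{wk}^\dagger\right].
\]
Averaging over $\eta$ and using linearity of the trace then yields $q_{\pi,\eta}(w)=\tr[E\,\mathcal E_w(\rho)]$. This shows that the ensemble rule is exactly the Born rule applied to the stated map $\mathcal E_w$.

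\emph{Step 2: show that $\mathcal E_w$ is a channel.} Each $W_{wk}$ is a product of elements of $U(d)$, so it is unitary. Since $\eta\in\Delta(K)$, the map $\mathcal E_w$ is a convex combination of unitary conjugations. Its Kraus operators $\sqrt{\eta_k}W_{wk}$ satisfy $\sum_k\eta_kW_{wk}^\dagger W_{wk}=I$, so the map is completely positive and trace preserving. It is also unital and defined on all of $\mathcal B(\mathbb C^d)$, not only on the fixed target $\rho$. This is the sense in which the ensemble rule ``extends''. One point needs care here: the map itself does not depend on $\rho$. The seeds depend on $\rho$, but the operators $W_{wk}$ depend only on the labels, so the formula $\rho\mapsto\sum_k\eta_kW_{wk}\rho W_{wk}^\dagger$ is the natural extension. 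It reproduces every context probability for any input state and effect.

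\emph{Step 3: the inverse word.} For $w=(g_1,\dots,g_\ell)$, the paired inverse word is $\bar w=(g_\ell^{-1},\dots,g_1^{-1})$. With the convention $\pi_{g^{-1}}=\pi_g^{-1}$, Eq.~\eqref{eq:word} gives
\[
 \pi_{w\bar w}=\pi_{g_1}^{-1}\circ\cdots\circ\pi_{g_\ell}^{-1}\circ\pi_{g_\ell}\circ\cdots\circ\pi_{g_1}=\id_X.
\]
Hence $\pi_{w\bar w}(a_k)=a_k$ and $W_{w\bar w,k}=F_{a_k}F_{a_k}^\dagger=I$ in every context $k$. Each branch therefore returns exactly to $\rho$, not merely on average. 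The branch state is $W\rho W^\dagger$, and $W=I$ holds exactly, so no phase ambiguity in the representatives $F_i$ arises.

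The only real obstacle is bookkeeping. One must check the chronological ordering convention in Eq.~\eqref{eq:word} so that the concatenated word composes to the identity. One must also note that exactness relies on the label returning, not on any approximation of $U_g$ by the frames. Beyond that, everything is routine.
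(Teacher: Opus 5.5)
Your proof is correct and follows the paper's argument. You substitute the compensated seed \eqref{eq:seed} into \eqref{eq:context-probability}, invoke unitarity of $W_{wk}$ and normalisation of $\eta$ for the channel property, and use $\pi_{w^{-1}}\circ\pi_w=\id$ so that the final frame $F_{a_k}$ cancels its preparation compensation; you additionally spell out the Kraus normalisation and the chronological-order bookkeeping that the paper leaves implicit.
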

\begin{proof}
Substitute \eqref{eq:seed} into \eqref{eq:context-probability}. Each $W_{wk}$ is unitary and the weights sum to one. For reversal, $\pi_{w^{-1}}\pi_w=\id$, so the final frame is $F_{a_k}$ and cancels its preparation compensation.
\end{proof}
The channel statement does not imply $\mathcal E_{vw}=\mathcal E_v\circ\mathcal E_w$: the latter expression can implicitly redraw a context that the model retains. The primitive specification is the labelled dynamics and its preparation law. Operational descriptions of quantum memory likewise consider sequences of interventions rather than isolated input--output maps~\cite{Pollock2018}. We specify the retained frame dynamics directly and do not infer a general Markovianity criterion from the channel representation.

\subsection{Accuracy graphs and calibration}
Use the phase-optimised operator distance
\begin{equation}
 d_{\op}(V,W)=\min_{|z|=1}\|V-zW\|_{\op}.
 \label{eq:metric}
\end{equation}
The accuracy graph for command $g$ contains precisely the edges
\begin{equation}
 (i,j)\in A_g
 \quad\Longleftrightarrow\quad
 d_{\op}(F_j,U_gF_i)\leq\eps_g.
 \label{eq:graph}
\end{equation}
Prescribed calibration edges must belong to the table. Additional conditions, such as locality, commutation with a frame symmetry, or an involution requirement, are imposed explicitly. Denote the resulting nonempty finite controller class by $\Pi$.

For a transition $i\mapsto j$, the selected physical matrix is $F_jF_i^\dagger$. Thus \eqref{eq:graph} bounds its error from $U_g$. In dimension two,
\begin{equation}
 d_{\op}(V,W)^2=2-|\tr(V^\dagger W)|.
 \label{eq:trace-distance}
\end{equation}
Indeed, after removal of a common phase the two eigenvalues of $V^\dagger W$ can be centred on the shorter unit-circle arc; their maximum distance from one is determined by the trace modulus. This identity is specific to two-dimensional unitaries and is not used as a higher-dimensional formula.

\begin{proposition}[Accumulated error]\label{prop:error}
Every allowed controller and preparation prior obey
\begin{equation}
 |q_{\pi,\eta}(w)-p(w)|\leq
 \min\left\{1,\sum_{s=1}^{\ell}\eps_{g_s}\right\}.
 \label{eq:accumulated}
\end{equation}
\end{proposition}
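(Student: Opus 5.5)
The plan is to work context by context and then average over the preparation prior. By Proposition~\ref{prop:channel}, the context-$k$ probability is
\[
Q^\pi_{wk}=\tr\!\left[E\,W_{wk}\rho W_{wk}^\dagger\right],\qquad W_{wk}=F_{\pi_w(a_k)}F_{a_k}^\dagger .
\]
Because $\sum_k\eta_k=1$, convexity gives $|q_{\pi,\eta}(w)-p(w)|\le\max_k|Q^\pi_{wk}-p(w)|$. It therefore suffices to bound $|\tr[E W\rho W^\dagger]-\tr[E U_w\rho U_w^\dagger]|$ for each unitary $W=W_{wk}$ by $\min\{1,\sum_s\eps_{g_s}\}$. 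The cap at $1$ is immediate: both quantities are probabilities, since $0\le E\le I$ and $\rho$ is a density operator, so they lie in $[0,1]$.

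For the nontrivial bound I will telescope along the label trajectory. Write $i_0=a_k$ and $i_s=\pi_{g_s}(i_{s-1})$, so that $i_\ell=\pi_w(a_k)$. Then
\[
W=\prod_{s=\ell}^{1}\bigl(F_{i_s}F_{i_{s-1}}^\dagger\bigr)=V_\ell\cdots V_1,\qquad V_s=F_{i_s}F_{i_{s-1}}^\dagger .
\]
Each transition $i_{s-1}\mapsto i_s$ is an edge of $A_{g_s}$, because every allowed table lies in its accuracy graph. By \eqref{eq:graph} and unitary invariance of the operator norm there is a phase $z_s$ with $\|V_s-z_sU_{g_s}\|_{\op}\le\eps_{g_s}$. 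Set $z=\prod_s z_s$. The standard hybrid argument
\[
V_\ell\cdots V_1-zU_{g_\ell}\cdots U_{g_1}=\sum_{s}V_\ell\cdots V_{s+1}\,(V_s-z_sU_{g_s})\,z_{s-1}\cdots z_1U_{g_{s-1}}\cdots U_{g_1},
\]
together with unitarity of all the other factors, gives $\|W-zU_w\|_{\op}\le\sum_s\eps_{g_s}$.

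Next I need to convert this operator bound into a probability bound without losing a factor of $2$. The global phase $z$ cancels in $U\rho U^\dagger$, so the reduction is to the following claim: for unitaries $W,U$ with $\|W-U\|_{\op}\le\delta$, we have $|\tr[E(W\rho W^\dagger-U\rho U^\dagger)]|\le\delta$. The naive estimate $\|W\rho W^\dagger-U\rho U^\dagger\|_1\le 2\delta$, combined with $|\tr[EX]|\le\tfrac12\|X\|_1$ for traceless Hermitian $X$ and $0\le E\le I$, already yields exactly $\delta$. The factor $\tfrac12$ holds because $E$ can be replaced by $E-\tfrac12 I$, whose norm is at most $\tfrac12$, without changing the trace against a traceless $X$. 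For the first inequality I will write $W\rho W^\dagger-U\rho U^\dagger=(W-U)\rho W^\dagger+U\rho(W-U)^\dagger$ and apply H\"older's inequality to each term.

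The main obstacle is precisely this constant-tracking step: making sure that the half-trace-norm bound on effects compensates for the two-sided perturbation. The rest is routine telescoping. Averaging over $k$ with the weights $\eta_k$ and taking the minimum with $1$ then gives \eqref{eq:accumulated} for every controller in $\Pi$ and every $\eta\in\Pp$.
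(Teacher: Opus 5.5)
Your proof is correct and follows the paper's argument. You choose a per-step phase realising the projective distance, telescope the frame product into an operator-norm error of at most $\sum_s\eps_{g_s}$, convert this to a probability bound via trace distance, and average branchwise over $\eta$; where the paper simply cites the standard trace-distance bound, you spell out the constants yourself, combining the H\"older estimate $2\delta$ with replacing $E$ by $E-\tfrac12 I$.
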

\begin{proof}
Choose a phase for each selected step that realises its projective distance. Telescoping the product gives an operator-norm error bounded by the sum of the step errors. For any input density matrix and effect, the probability difference is bounded by the corresponding trace distance~\cite[Chap.~3]{Watrous2018}, which is at most that operator-norm error. Apply the bound branchwise and average.
\end{proof}
Consequently, improved per-gate accuracy gives convergence on every fixed finite circuit collection. It does not by itself establish uniform convergence over arbitrarily long words.

\section{Simultaneous certification}\label{sec:simultaneous}
\subsection{One model for the entire experiment}
For a finite nonempty collection $\W$, define
\begin{equation}
 R(\W)=\min_{\pi\in\Pi,\;\eta\in\Pp}
       \max_{w\in\W}|q_{\pi,\eta}(w)-p(w)|.
 \label{eq:radius}
\end{equation}
The minimum is attained because $\Pi$ is finite and $\Pp$ is compact. The inequality $R(\W)>\delta$ means that every allowed controller/prior pair fails at least one test by more than $\delta$. The failing circuit may depend on the pair. It does not assert that one circuit fails for all pairs.

The independently fitted radius is
\begin{equation}
 R_{\mathrm{ind}}(\W)=
 \max_{w\in\W}\min_{\pi\in\Pi,\;\eta\in\Pp}
 |q_{\pi,\eta}(w)-p(w)|\leq R(\W).
 \label{eq:individual}
\end{equation}
Section~\ref{sec:four} proves that the inequality can be strict even for a four-frame qubit catalogue with exactly two allowed tables. Also, enlarging $\W$ cannot decrease $R$, while enlarging $\Pi$ or $\Pp$ cannot increase it. These monotonicities make the scope of an exclusion certificate transparent.

For a fixed table, let $B^\pi_{wk}=Q^\pi_{wk}-p(w)$. Optimising the prior is the linear programme
\begin{equation}
 r_\pi(\W)=\min t\quad\text{subject to}\quad
 \eta\in\Pp,\qquad -t\mathbf1\leq B^\pi\eta\leq t\mathbf1.
 \label{eq:prior-lp}
\end{equation}
When $\Pp=\Delta(K)$, convex minimax~\cite{Sion1958} gives
\begin{equation}
 r_\pi(\W)=\max_{\|v\|_1\leq1}\min_k
               \sum_{w\in\W}v_w B^\pi_{wk}.
 \label{eq:dual}
\end{equation}
A feasible contrast $v$ can therefore certify a discrepancy for every prior of that table. A class-wide contrast must satisfy the required inequality for \emph{every} table as well. In general one cannot exchange minimisation over the discrete table class with maximisation over contrasts. Such an exchange instead corresponds to an appropriate convexified model class.

\subsection{Exact shared-table feasibility}
The table need not be enumerated before testing a collection. Introduce binary variables $x_{gij}$ with row and column sums one, prohibited edges zero, and calibration entries fixed. These variables represent each command's permutation. Inverse commands use transposed variables. Any locality or symmetry constraints tie the corresponding entries.

For every word $w$, context $k$, and step $s$, let $m^w_{ksi}$ be the probability mass at label $i$, and $f^w_{ksij}$ the mass on a transition. Impose
\begin{align}
 m^w_{k0i}&=\eta_k\mathbf1_{i=a_k},\qquad \eta\in\Pp,\nonumber\\
 \sum_j f^w_{ksij}&=m^w_{k,s-1,i},\qquad
 \sum_i f^w_{ksij}=m^w_{ksj},\label{eq:flow}\\
 0\leq f^w_{ksij}&\leq x_{g_sij},\qquad m^w_{ksi}\geq0.\nonumber
\end{align}
For $r_{kj}=\tr(EF_j\rho_kF_j^\dagger)$, require
\begin{equation}
 -\delta\leq\sum_{k,j}r_{kj}m^w_{k,|w|,j}-p(w)\leq\delta
 \quad (w\in\W).
 \label{eq:terminal}
\end{equation}
The same $x$ and $\eta$ occur in all words; only their trajectory masses differ.

\begin{theorem}[Shared-table criterion]\label{thm:flow}
Equations \eqref{eq:flow}--\eqref{eq:terminal}, together with the binary assignment and model-class constraints, are feasible if and only if $R(\W)\leq\delta$.
\end{theorem}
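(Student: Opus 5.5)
We prove the two directions separately. The key point is that once the binary variables $x$ are fixed to a permutation, the flow constraints \eqref{eq:flow} leave no freedom: the trajectory masses are forced to be the deterministic push-forward of the initial masses under $\pi_{g_s}$.

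\emph{Feasibility implies $R(\W)\le\delta$.} Suppose $(x,\eta,m,f)$ is feasible. The binary variables with unit row and column sums define permutation matrices $x_g$. The prohibited-edge, calibration, inverse-transpose and tie constraints say exactly that the induced tables $\pi=(\pi_g)$ lie in $\Pi$.

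Next we show by induction on $s$ that
\[
m^w_{ksi}=\eta_k\mathbf 1_{i=\pi_{g_s}\circ\cdots\circ\pi_{g_1}(a_k)}.
\]
The base case $s=0$ is the initial condition in \eqref{eq:flow}. For the inductive step, the bound $0\le f^w_{ksij}\le x_{g_sij}$ forces $f^w_{ksij}=0$ unless $j=\pi_{g_s}(i)$. Since row $i$ of $x_{g_s}$ has a single nonzero entry, the outflow equation gives $f^w_{ksi\pi_{g_s}(i)}=m^w_{k,s-1,i}$. The inflow equation then transports this mass to $j=\pi_{g_s}(i)$; column-sum one ensures no other mass arrives there.

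At $s=|w|$ the terminal mass is therefore $\eta_k$ at the label $\pi_w(a_k)$, using the composition \eqref{eq:word}. Comparing with \eqref{eq:context-probability}, the terminal sum in \eqref{eq:terminal} equals
\[
\sum_k\eta_k r_{k\pi_w(a_k)}=\sum_k\eta_k Q^\pi_{wk}=q_{\pi,\eta}(w).
\]
Hence \eqref{eq:terminal} states $|q_{\pi,\eta}(w)-p(w)|\le\delta$ for all $w\in\W$. Taking the minimum in \eqref{eq:radius} gives $R(\W)\le\delta$.

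\emph{$R(\W)\le\delta$ implies feasibility.} The minimum in \eqref{eq:radius} is attained, so some $(\pi,\eta)\in\Pi\times\Pp$ has maximal deviation at most $\delta$. Set $x_{gij}=\mathbf 1_{j=\pi_g(i)}$. Membership $\pi\in\Pi$ is equivalent to all assignment and model-class constraints, including transposes for inverse commands, because $\pi_{g^{-1}}=\pi_g^{-1}$. Define $m$ by the formula above and set $f^w_{ksij}=m^w_{k,s-1,i}\,x_{g_sij}$. The flow equations and bounds follow since each row and column of $x_{g_s}$ contains exactly one $1$, and $\eta_k\le1$. The terminal identity computed above turns the hypothesis into \eqref{eq:terminal}.

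The only genuine step is the forcing lemma, that integral $x$ makes the LP masses deterministic. The care needed there is to use both row- and column-sum constraints, together with $f\le x$, so that mass can neither split nor merge. The remaining point is bookkeeping: the constraint set must encode $\Pi$ exactly, which is by construction of the model-class constraints. No compactness is needed beyond the attainment of the minimum already noted after \eqref{eq:radius}.
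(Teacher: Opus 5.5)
Your proof is correct and follows the same route as the paper. For $R(\W)\le\delta$ implies feasibility, an optimal controller--prior pair supplies the assignment and its actual trajectory masses. For the converse, an integral assignment together with $0\le f\le x$ and conservation forces all mass along the single selected edge, so induction over steps recovers the stationary trajectories and \eqref{eq:terminal} becomes exactly $|q_{\pi,\eta}(w)-p(w)|\le\delta$. You spell out details the paper leaves implicit: the explicit flows $f^w_{ksij}=m^w_{k,s-1,i}\,x_{g_sij}$ and the need for the minimum in \eqref{eq:radius} to be attained.
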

\begin{proof}
An allowed controller and prior supply the assignments and their actual trajectory masses. Conversely, an integral assignment has exactly one outgoing edge at each source. All other outgoing flow capacities vanish. Conservation forces the full mass at that source to follow the selected edge. Induction over steps therefore gives exactly the trajectories of one stationary controller. The terminal equations are then precisely the desired simultaneous probability bounds.
\end{proof}
For algebraic inputs this is a finite mixed-integer linear problem with algebraic coefficients. Dropping integrality generally allows mass splitting and is only a relaxation. A floating-point solver's failure to find a solution is not a proof of infeasibility.

An alternative exact search assigns transitions only when a tested history first visits an unassigned source. A new assignment must be injective and extend to a complete allowed permutation. After deleting the used rows and columns, such an extension exists exactly when the residual bipartite graph has a perfect matching~\cite{Hall1935}. Thus a failed matching check provides a sound pruning rule. Repeated visits reuse the existing assignment. The finite benchmark in Section~\ref{sec:benchmark} uses this approach. Maximum bipartite matching has efficient classical algorithms~\cite{HopcroftKarp1973}; the supplied verifier uses a simple augmenting-path implementation for its finite instances.

\begin{figure}[tbp]
\centering
\begin{tikzpicture}[>=Latex,box/.style={draw=linkblue,rounded corners=2pt,align=center,font=\small,inner sep=5pt}]
\node[box] (prep) at (0,0) {Fixed preparation\\$\rho$, context law $\eta$};
\node[box] (words) at (3.5,0) {Circuit collection\\$w\in\mathcal W$};
\node[box] (ideal) at (7,0.8) {Ideal gates\\$p(w)$};
\node[box] (finite) at (7,-0.8) {Shared controller\\$q_{\pi,\eta}(w)$};
\node[box] (test) at (10.1,0) {Joint test\\$R(\mathcal W)$};
\draw[->] (prep)--(words);\draw[->] (words)--(ideal);\draw[->] (words)--(finite);
\draw[->] (ideal)--(test);\draw[->] (finite)--(test);
\end{tikzpicture}
\caption{Every circuit uses the same preparation law and the same command tables. The ideal predictions and the finite-model predictions are compared as complete vectors. Different experimental runs may use different circuits, but their model parameters are shared.}
\label{fig:setup}
\end{figure}

\section{An exact separation with four frames}\label{sec:four}
The distinction between separate and joint fits can be demonstrated without a large numerical search. Let
\[
 R_y(\phi)=\begin{pmatrix}\cos(\phi/2)&-\sin(\phi/2)\\
 \sin(\phi/2)&\cos(\phi/2)\end{pmatrix},\qquad
 F_i=R_y(i\pi/2),\quad i\in\mathbb Z/4\mathbb Z.
\]
Take the nominal commands
\begin{equation}
 G=R_y(\pi/4),\qquad A=R_y(\pi/2),\qquad
 \eps_G=2\sin(\pi/16),\quad\eps_A=0.
 \label{eq:four-gates}
\end{equation}
The exact $A$ table is the cyclic shift $s(i)=i+1$. The accuracy graph for $G$ contains exactly $i\mapsto i$ and $i\mapsto i+1$: their angle errors are $\pi/4$, while the two other frame choices have projective angular error $3\pi/4$.

There are only two perfect matchings of this graph. If one source uses its shift edge, bijectivity forces the next source to shift, and hence forces every source to shift. Otherwise all sources stay fixed. Thus the complete allowed class is $\pi_G=\id$ or $\pi_G=s$; no additional translation symmetry is assumed to obtain this classification.

Prepare $\rho=\ket0\bra0$ and measure $E=\ket1\bra1$. One may start at the identity frame, or use any fixed prior over the four initial frames with compensated seeds. Both allowed tables are translations, so the endpoint probabilities are independent of that prior. For the chronological words $\W=\{(G),(G,A)\}$, put $a=(2-\sqrt2)/4$. Direct rotation gives
\begin{equation}
 p=(a,1-a),\qquad q_{\id}=(0,1/2),\qquad q_s=(1/2,1).
 \label{eq:four-vectors}
\end{equation}

\begin{theorem}[Strict individual-versus-joint separation]\label{thm:four}
For the complete four-frame class above,
\begin{equation}
 R_{\mathrm{ind}}(\W)=\frac{2-\sqrt2}{4},\qquad
 R(\W)=\frac{\sqrt2}{4}.
 \label{eq:four-radii}
\end{equation}
Each circuit has an allowed fit with error below $1/4$, but no single allowed controller fits both within $1/4$.
\end{theorem}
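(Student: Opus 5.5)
The plan is to reduce everything to the two probability vectors in \eqref{eq:four-vectors} and then evaluate the two radii by finite enumeration. The classification above gives $\Pi=\{\id,s\}$ for $\pi_G$, with $\pi_A=s$ fixed. The endpoint probabilities do not depend on $\eta$. Indeed, for a translation table the final frame is $F_{a_k+c}=F_cF_{a_k}$, since the $F_i$ form a cyclic group of $R_y$ rotations. The branch unitary $W_{wk}$ of Proposition~\ref{prop:channel} is therefore $F_c$ for every $k$. Consequently the minimisation over $\Pp$ in \eqref{eq:radius} and \eqref{eq:individual} is trivial, and each radius is a finite max/min over two vectors.

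First I will verify \eqref{eq:four-vectors} by direct rotation. For an $R_y(\phi)$ rotation, $\tr(\ket1\bra1R_y(\phi)\ket0\bra0R_y(\phi)^\dagger)=\sin^2(\phi/2)$. The ideal angles are $\pi/4$ and $3\pi/4$, which give $p=(\sin^2(\pi/8),\sin^2(3\pi/8))=(a,1-a)$. The table $\id$ yields frame angles $0$ and $\pi/2$, so $q_{\id}=(0,1/2)$. The table $s$ yields $\pi/2$ and $\pi$, so $q_s=(1/2,1)$. In passing I will confirm the accuracy-graph claim with \eqref{eq:trace-distance}: a residual rotation of angle $\theta$ has $d_{\op}^2=2-2|\cos(\theta/4)|$. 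With $\theta=\pi/4$ this gives $4\sin^2(\pi/16)=\eps_G^2$. With $\theta=3\pi/4$ it gives a strictly larger value, where the phase freedom identifies $\theta$ with $2\pi-\theta$ modulo sign.

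Next I compute the error vectors. Using $a\approx0.146$ and $1/2-a=\sqrt2/4$, the table $\id$ has errors $(a,\ 1/2-a)=(a,\sqrt2/4)$, and the table $s$ has errors $(1/2-a,\ a)=(\sqrt2/4,a)$. For the individual radius, each word takes the minimum over the two tables, giving $\min\{a,\sqrt2/4\}=a$ in both coordinates, so $R_{\mathrm{ind}}=a=(2-\sqrt2)/4$. For the joint radius, each table's maximum error is $\sqrt2/4$, so $R=\sqrt2/4$. Finally, $a<1/4<\sqrt2/4\approx0.354$ gives the last sentence of the theorem.

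The only delicate point is justifying that the class is complete and that the prior truly drops out. The completeness is handled by the matching argument stated before the theorem. For the prior, I will make the translation-covariance computation explicit, since it is what allows me to avoid any linear programme \eqref{eq:prior-lp}.
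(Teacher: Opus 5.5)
Your proposal is correct and follows the paper's proof. Both read off the error vectors $(a,\sqrt2/4)$ for $\id$ and $(\sqrt2/4,a)$ for $s$, take coordinatewise minima for $R_{\mathrm{ind}}$ and tablewise maxima for $R$, and conclude from $a<1/4<\sqrt2/4$; your prior-independence and direct-rotation checks only spell out what the paper asserts before the theorem. There are two harmless slips in those side checks: a residual $R_y(\theta)$ has $d_{\op}^2=2-2|\cos(\theta/2)|$, not $2-2|\cos(\theta/4)|$, and only the former gives $\eps_G^2=4\sin^2(\pi/16)$ at $\theta=\pi/4$; also $F_{a+c}=F_cF_a$ holds only up to a sign, since the $F_i$ form a group only projectively, which affects no probability.
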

\begin{proof}
For the first circuit the identity table has error $a$, and the shift table has error $1/2-a=\sqrt2/4$. For the second circuit these errors are exchanged. Taking the minima separately yields $a$; taking the maximum for either shared table yields $\sqrt2/4$. The inequalities $a<1/4<\sqrt2/4$ are strict.
\end{proof}
This example has a gate-error allowance approximately $0.3902$. It proves a structural separation within an explicitly classified model; it does not claim a small-error asymptotic obstruction.

\begin{figure}[tbp]
\centering\includegraphics[width=.70\linewidth]{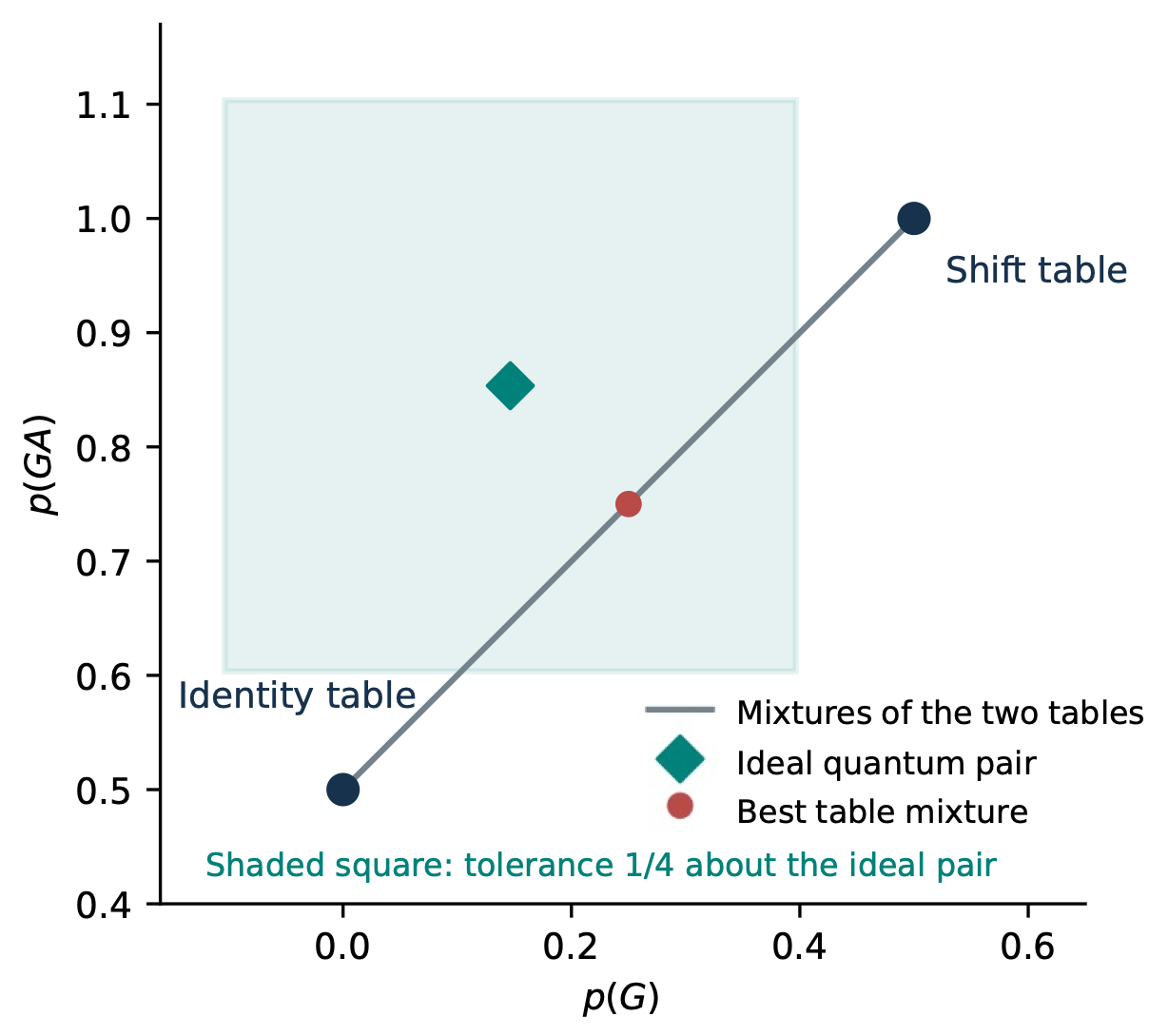}
\caption{Exact two-circuit probability vectors for Theorem~\ref{thm:four}. Each coordinate can be fitted within $1/4$ by a different table, but neither complete table vector enters the shaded tolerance square. The line segment consists of an enlarged class that randomises over the two tables between runs.}
\label{fig:four}
\end{figure}

\subsection{Why randomising the table changes the certificate}
If a fresh table may be selected with probability $\lambda$ on each run, the attainable pair becomes $(\lambda/2,1/2+\lambda/2)$. This is a different model, not an alternative choice of the compensated-frame prior. Its optimal radius is
\begin{equation}
 R_{\mathrm{mix}}(\W)=\frac{\sqrt2-1}{4},
 \label{eq:mixed-radius}
\end{equation}
achieved at $\lambda=1/2$. To prove optimality, every such pair has coordinate difference $1/2$, whereas the ideal difference is $1/\sqrt2$. Therefore its maximum coordinate error is at least half the difference of these two quantities; the midpoint attains the bound. The common linear contrast
\[
 \frac12\bigl[(q_1-p_1)-(q_2-p_2)\bigr]
 =\frac{\sqrt2-1}{4}
\]
separates even the mixed-table class from the exact target, but does not certify the larger deterministic radius. This calculation illustrates why discrete table optimisation and convex prior optimisation must be distinguished.

\section{An exact 464-frame benchmark}\label{sec:benchmark}
\subsection{Catalogue and symmetry}
Let $H=2^{-1/2}\left(\begin{smallmatrix}1&1\\1&-1\end{smallmatrix}\right)$ and $T=\diag(1,e^{i\pi/4})$. For $t=n/8$, $1\leq n\leq7$, and $\beta,\gamma\in\{0,\pi/4,\ldots,7\pi/4\}$, take
\begin{equation}
 V(t,\beta,\gamma)=
 \begin{pmatrix}
 \sqrt t&\sqrt{1-t}e^{i\beta}\\
 \sqrt{1-t}e^{i\gamma}&-\sqrt t e^{i(\beta+\gamma)}
 \end{pmatrix}.
 \label{eq:catalogue}
\end{equation}
Add the eight endpoint frames $\left(\begin{smallmatrix}0&1\\e^{i\gamma}&0\end{smallmatrix}\right)$ and the eight frames $\diag(1,-e^{i\gamma})$. These $16+7\cdot64=464$ projective frames form $\C$.

Use exact left multiplication by $T$ for the phase table $t$. For the Hadamard table $h$, require
\begin{equation}
 d_{\op}(F_{h(i)},HF_i)\leq\frac{13}{50},\qquad
 h(I)=H,\quad h(H)=I.
 \label{eq:local-constraints}
\end{equation}
Additionally require equivariance under right multiplication by $T$: if $\rho_T$ is its label permutation, then $h\rho_T=\rho_T h$. Exact left multiplication $t$ also commutes with $\rho_T$. This is an explicit model restriction, not a consequence of the accuracy bound or of finiteness.

Prepare $\ket0$ with initial contexts $T^k$ and compensated seeds $T^{-k}\ket0$, for any fixed distribution on $k=0,\ldots,7$. For every word in $H,T$, equivariance gives
\begin{equation}
 F_{\pi_w(T^k)}T^{-k}\ket0
 \sim F_{\pi_w(I)}\ket0.
 \label{eq:phase-independence}
\end{equation}
All contexts therefore have identical terminal rays and probabilities. Optimisation over the preparation prior disappears for this entire circuit family.

\subsection{The exact reduced graph}
Right multiplication by $T$ partitions the catalogue into 58 orbits of length eight. Label a frame by $(i,k)$ with an orbit representative $i\in\{0,\ldots,57\}$ and a right phase $k\in\mathbb Z/8\mathbb Z$. The repeated chronological block $(T,H)$ has table $P=ht$, and every equivariant permutation has the form
\begin{equation}
 P(i,k)=(\sigma(i),k+s_i),\qquad \sigma\in S_{58}.
 \label{eq:quotient}
\end{equation}
The representative ordering is specified in Appendix~\ref{app:labels}. The initial orbit is 1. Calibration fixes
\begin{equation}
 \sigma(1)=26,\quad s_1=5,\qquad
 \sigma(33)=1,\quad s_{33}=4.
 \label{eq:quotient-anchors}
\end{equation}
For each representative source and each of the 464 target frames, the exact edge test is
\begin{equation}
 \left|\tr(F_j^\dagger HTF_i)\right|^2
 \geq \left(2-(13/50)^2\right)^2.
 \label{eq:exact-edge}
\end{equation}
All 26,912 comparisons are performed in
$\mathbb Q(\sqrt2,\sqrt3,\sqrt5,\sqrt7,i)$. After imposing the calibration rows and columns there are exactly 174 permitted triples $(i,j,s)$. In this instance each permitted orbit pair has a unique shift. A perfect matching on these 174 edges therefore specifies a full equivariant controller. Its $H$ table is recovered as $h=Pt^{-1}$.

The readout from $\ket0$ is constant on each right-phase orbit. The two endpoint orbits have readouts 1 and 0; each interior orbit at level $n/8$ has readout $1-n/8$. Thus probability comparisons reduce to exact rational readouts and algebraic ideal probabilities.

\subsection{Sharp repeated-circuit horizon}
For $n$ repetitions of $(T,H)$, define $p_n=|\bra1(HT)^n\ket0|^2$. If
\begin{equation}
 \cos(2\theta)=-\frac{2+\sqrt2}{4},\quad 0<\theta<\frac\pi2,
 \qquad a=\frac{6-\sqrt2}{17},
 \label{eq:theta}
\end{equation}
then
\begin{equation}
 p_n=a\,[1-\cos(2n\theta)].
 \label{eq:ideal-local}
\end{equation}
Indeed $|\tr(HT)|^2=1-1/\sqrt2$, so after removal of a common phase the eigenvalues are $e^{\pm i\theta}$. The Cayley--Hamilton identity and $|(HT)_{10}|^2=1/2$ give \eqref{eq:ideal-local}. Equivalently, set $c_0=1$, $c_1=-(2+\sqrt2)/4$ and
\begin{equation}
 c_{n+1}=2c_1c_n-c_{n-1},\qquad p_n=a(1-c_n).
 \label{eq:chebyshev}
\end{equation}
This recurrence performs every probability-band comparison exactly in $\mathbb Q(\sqrt2)$.

\begin{theorem}[Certified local horizon]\label{thm:horizon}
For the full class \eqref{eq:local-constraints} with right-phase equivariance, one controller satisfies $|q_n-p_n|<13/40<1/3$ for all $0\leq n\leq192$. Every controller in this class has $|q_n-p_n|>1/3$ for at least one $1\leq n\leq193$. Both statements hold for every fixed compensated phase prior.
\end{theorem}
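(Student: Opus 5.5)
By the phase independence \eqref{eq:phase-independence}, every compensated context yields the same terminal ray, so $q_n$ depends only on the equivariant controller. Concretely, $q_n$ is the rational readout of the orbit $\sigma^n(1)$; the shifts $s_i$ do not enter, because readout is constant on right-phase orbits. This settles the final sentence of the theorem: both statements can be proved for the prior concentrated on the identity context. The problem is then purely combinatorial. Consider the bipartite graph on $58+58$ orbit vertices given by the 174 permitted triples of \eqref{eq:exact-edge}, with the calibration anchors \eqref{eq:quotient-anchors} imposed. Every admissible $P$ corresponds to a perfect matching $\sigma$ of this graph. The quantity to control is the trajectory $1=\sigma^0(1),\sigma(1)=26,\sigma^2(1),\ldots$ together with its readouts $r(\sigma^n(1))\in\{0,1,1-n'/8\}$. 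The comparison values $p_n=a(1-c_n)$ come from the recurrence \eqref{eq:chebyshev}, and each band test $|r-p_n|\lessgtr 1/3$ or $<13/40$ is decided exactly in $\mathbb Q(\sqrt2)$ by sign tests of the form $u+v\sqrt2>0$.

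For the existence half, I would exhibit an explicit witness matching $\sigma^\ast$. The first step is to check that it uses only permitted edges and respects the anchors, so that by \eqref{eq:quotient} it defines an equivariant controller with $h=Pt^{-1}$ satisfying \eqref{eq:local-constraints}. I would then iterate the orbit of 1 for $n\le192$ and verify $|r(\sigma^{\ast n}(1))-p_n|<13/40$ at each step. Since the orbit of 1 under a permutation of 58 points is a cycle of length at most 58, only finitely many distinct readouts appear. The check amounts to comparing a periodic rational sequence against the quasi-periodic $p_n$, which takes 193 exact comparisons. The case $n=0$ is trivial because $q_0=p_0=0$.

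For the exclusion half, I would use the lazy assignment search described after Theorem~\ref{thm:flow}. The search is a depth-first tree over partial injections along the single history starting at orbit 1. At depth $n$, the current orbit $x=\sigma^{n-1}(1)$ is either already assigned, in which case the path is forced, or a new target $y$ is chosen among its permitted neighbours that are not yet used as targets. A branch is pruned when any of the following holds: the chosen readout violates $|r(y)-p_n|\le1/3$; the residual bipartite graph on unassigned rows and columns has no perfect matching, checked by Hall's theorem through an augmenting-path test; or the orbit closes into a cycle whose periodic continuation violates the band at some $n\le193$. Soundness follows because every admissible $\sigma$ restricts to one root-to-leaf path of this tree. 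If every path dies by depth 193, then every controller has $|q_n-p_n|>1/3$ for some $1\le n\le193$. One subtlety needs care: no value may satisfy $|q_n-p_n|=1/3$ exactly. This holds because $p_n\pm1/3$ is irrational whenever $c_n\notin\mathbb Q$, and the rational cases can be checked directly.

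The main obstacle is making the exclusion search tractable and its completeness auditable. The branching could be large, although the graph is sparse (174 edges on 58 rows, about three per row). Given that sparsity and the repeated revisits once the cycle closes, I expect the tree to stay small. To make the result a certificate, the verifier should record the full pruned tree, listing for each leaf the failing $n$ or a Hall violator set, so that it can be rechecked independently. A secondary difficulty is the sharpness at 192 versus 193. This rests on $p_n$ approaching a readout within the margin near $n=193$ along every surviving cycle. I would confirm it by reporting the minimal slack $13/40-|q_n-p_n|$ for the witness, and the exact violating index for each leaf of the exclusion tree.
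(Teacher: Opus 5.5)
Your proposal follows the paper's proof essentially step for step. The existence half exhibits an explicit witness matching and checks it against the exact $\mathbb Q(\sqrt2)$ recurrence. The exclusion half is the lazy-assignment search with closed-band retention, injection and matching-extension pruning, which is sound because every admissible controller traces one branch, and phase independence removes the prior.

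One correction concerns your remark that no index may satisfy $|q_n-p_n|=1/3$ exactly. That remark is unnecessary, since you prune only on strict violation and therefore already obtain $|q_n-p_n|>1/3$. Its justification is also false as stated: $c_1=-(2+\sqrt2)/4\notin\mathbb Q$, yet $p_1=1/2$ is rational.
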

\begin{proof}[Proof with an exhaustive finite certificate]
The witness quotient cycle through the initial orbit is
\begin{equation}
 (1,26,56,35,40,52,17,53,18,48,43,41,50,42,49,51,33).
 \label{eq:cycle}
\end{equation}
Every edge passes \eqref{eq:exact-edge}. A complete matching extending this cycle is supplied with the paper. Its shifts sum to zero modulo eight, so the full frame returns after 17 blocks. Its probabilities repeat
\begin{equation}
 \left(0,\tfrac12,\tfrac18,\tfrac38,\tfrac38,\tfrac18,\tfrac34,
 \tfrac18,\tfrac58,\tfrac14,\tfrac14,\tfrac38,\tfrac18,
 \tfrac14,\tfrac14,\tfrac18,\tfrac12\right).
 \label{eq:sequence}
\end{equation}
Exact comparisons using \eqref{eq:chebyshev} prove the asserted $13/40$ bound through 192 and an error greater than $17/50$ at 193.

For exclusion, enumerate paths from orbit 1 with the calibration assignments fixed. At depth $n$, retain a vertex only if its readout lies in the closed interval $[p_n-1/3,p_n+1/3]$. An assigned source reuses its target. An unassigned source branches over unused graph neighbours, retaining an assignment only if it extends to a full matching. Every admissible controller traces one such branch; probability, injection and matching-extension rejections cannot remove a feasible controller. Conversely, every surviving partial assignment has a matching completion. The exhaustive search has no surviving path through depth 193, and its largest surviving depth is 192. The exact code, complete graph, witness and run counts are supplied. Equation~\eqref{eq:phase-independence} removes the prior in both parts.
\end{proof}

The reconstructed search visited 41,595 nodes, with 6,565 matching-extension rejections, 4,620 probability-band rejections and 11,854 injection rejections. These counts describe the supplied traversal, not a complexity lower bound. They differ from earlier implementation counts because this verifier checks constraints in a different order. The numerical witness maximum through 192 is approximately $0.3219128661$, and its error at 193 is approximately $0.3447811784$; the theorem uses exact inequalities rather than these rounded values.

\begin{figure}[tbp]
\centering\includegraphics[width=.96\linewidth]{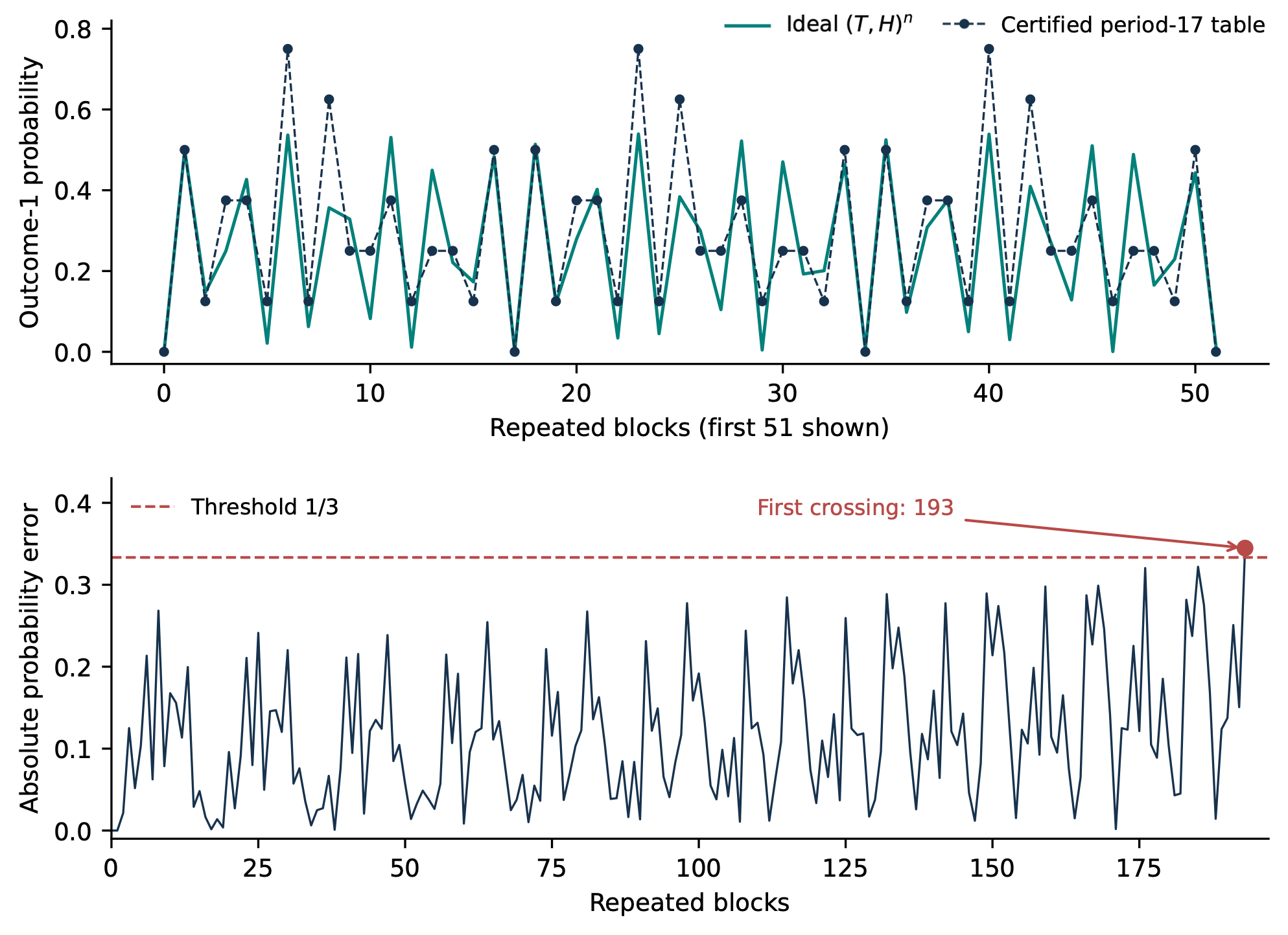}
\caption{The certified period-17 witness. Top: the first 51 repeated blocks compared with the ideal local probability. Bottom: its absolute error through block 193. This particular table crosses $1/3$ first at 193. The statement that every allowed table crosses by that horizon follows from exhaustive search, not from the plotted trajectory. Each block contains two elementary commands.}
\label{fig:horizon}
\end{figure}

\subsection{Many short circuits can still share a fit}
A repeated trajectory and a collection of short words constrain a controller differently. To make this distinction concrete, we also supply a second full quotient matching obtained by a constraint-generation search over short words.

\begin{proposition}[A simultaneous short-word witness]\label{prop:twelve}
In the same 464-frame class, one controller satisfies
\[
 |q(w)-p(w)|\leq\frac13
 \qquad\text{for every }w\in\{H,T\}^{\ell},\quad1\leq\ell\leq12.
\]
This holds for every fixed compensated phase prior. The displayed witness has maximum error approximately $0.3017766953$ over all 8,190 words.
\end{proposition}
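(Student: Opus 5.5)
This is an existence statement, so the proof is a finite verified certificate: exhibit one full equivariant controller and check every word in the collection exactly. The plan is as follows.

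First, reduce away the prior. By \eqref{eq:phase-independence}, every context $T^k$ with its compensated seed yields the same terminal ray as the context $I$. It therefore suffices to verify the bound for the single trajectory starting at the identity frame, and the claim then holds for every fixed compensated phase prior.

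Second, fix the witness. We take a complete perfect matching $\sigma$ with shifts $s_i$ on the 174 permitted quotient edges of \eqref{eq:quotient}, extending the calibration anchors \eqref{eq:quotient-anchors}. This defines $P=ht$, and hence $h=Pt^{-1}$. Membership in the class is then checked exactly:
\begin{itemize}
\item each edge of $h$ passes the exact trace test \eqref{eq:exact-edge} (equivalently \eqref{eq:local-constraints} via \eqref{eq:trace-distance});
\item $h(I)=H$ and $h(H)=I$;
\item $h$ commutes with $\rho_T$, which is automatic from the quotient form.
\end{itemize}
Exact left multiplication by $T$ is already equivariant, so the pair $(h,t)$ lies in the class.

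Third, enumerate the 8,190 words $w\in\{H,T\}^\ell$ with $1\le\ell\le12$, traversed as a binary tree from the identity frame. The prefix structure lets us apply one table per step rather than recomputing each word. For each word we compare two quantities:
\begin{itemize}
\item the model readout, which is the exact rational readout of the terminal orbit (0, 1, or $1-n/8$);
\item the ideal value $p(w)=|\bra1U_w\ket0|^2$, computed exactly. Since $H$ and $T$ have entries in $\mathbb Q(\sqrt2,e^{i\pi/4})$, $p(w)$ lies in $\mathbb Q(\sqrt2)$ up to normalisation by powers of $2$.
\end{itemize}
Each inequality $|q(w)-p(w)|\le 1/3$ becomes a sign test on an element $u+v\sqrt2$. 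Such tests are decided exactly by comparing squares of rationals, which avoids any floating-point claim. The reported maximum $\approx0.3017766953$ then follows. It is presumably of the form $r+s\sqrt2$, perhaps $(2-\sqrt2)/\!\cdots$-type, and is stated only as a rounded value.

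The main obstacle is not the verification but \emph{finding} the witness, since a greedy matching can fit the early words and fail later ones. We would do this by constraint generation:
\begin{itemize}
\item start with a small word set and solve the shared-table feasibility problem of Theorem~\ref{thm:flow}, or run the lazy matching search with Hall-type pruning;
\item check the resulting controller against all 8,190 words and add any violating words;
\item repeat until no word is violated.
\end{itemize}
Only the final certificate enters the proof, so the search method needs no soundness guarantee. Its output is checked independently by the exact verification above.
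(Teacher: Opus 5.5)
Your proposal is correct and follows essentially the same route as the paper's proof. Both verify that the supplied quotient matching is a bijection on the 58 orbits that uses only permitted edges and respects the anchors, lift it through the unique shifts, check all 8,190 words exhaustively in exact arithmetic against the rational orbit readouts, and remove the prior dependence via \eqref{eq:phase-independence}. (As a minor aside, the reported maximum agrees with $(1+\sqrt2)/8$ to all displayed digits, so your hedged $(2-\sqrt2)/\cdots$ guess is not the right form; this does not affect the argument.)
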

\begin{proof}
The supplied table is a bijection on the 58 orbits, uses only permitted edges, and obeys the anchors. The unique edge shifts lift it to 464 frames. Exhaustive enumeration of all $\sum_{\ell=1}^{12}2^\ell=8190$ words, using exact field arithmetic for the ideal matrices and rational endpoint readouts, verifies each closed-band inequality. Phase independence follows from \eqref{eq:phase-independence}.
\end{proof}
The proposition is a feasibility certificate, not a proof of optimality of that table or of a twelve-gate horizon. In particular, it rules out a $1/3$ exclusion certificate consisting solely of forward $H,T$ words of length at most twelve for this class. The 192-block theorem concerns a different collection, with words extending to 386 gates. Neither result establishes that a shorter joint exclusion collection exists on this catalogue. The analytic four-frame separation demonstrates the mechanism without asserting such an improvement for the larger class.

\section{Finite-window obstruction from periodic dynamics}\label{sec:spectral}
The finite benchmark has a sharp computational horizon. A less sharp but general bound follows from the spectral difference between a permutation and an irrational quantum oscillation. Consider the ideal local curve \eqref{eq:ideal-local} and any stationary reversible frame model with at most $M$ labels and a fixed compensated preparation prior. No phase symmetry or matching choice is needed for the following argument.

First, $\theta/\pi$ is irrational. If an eigenvalue ratio $\zeta=e^{2i\theta}$ of $HT$ were a root of unity, then $|\tr(HT)|^2=2+\zeta+\zeta^{-1}$ would be an algebraic integer. But $1-1/\sqrt2$ has conjugate product $1/2$, and hence is not an algebraic integer. Define the finite constant
\begin{equation}
 \kappa_M=\max_{1\leq q\leq M}\frac{q}{|\sin(q\theta)|},\qquad
 S_D(f)=\frac1D\sum_{n=0}^{D-1}f(n)\cos(2n\theta).
 \label{eq:kappa}
\end{equation}

\begin{lemma}[Periodic correlation bound]\label{lem:correlation}
For every probability curve $f$ produced by such a model,
\begin{equation}
 |S_D(f)|\leq\frac{\kappa_M}{D}.
 \label{eq:correlation}
\end{equation}
\end{lemma}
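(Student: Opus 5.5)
The approach is to reduce the correlation sum to periodic sequences and then bound each residue class of a period by a geometric series. Since $\theta/\pi$ is irrational, $\sin(q\theta)\neq0$ for every $q\geq1$, so $\kappa_M$ is finite.

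\emph{Step 1: periodicity of each context curve.} For the repeated block $(T,H)$ the label update after $n$ blocks is $P^n$, where $P=\pi_H\circ\pi_T$ is a fixed permutation of the $M$ labels. By \eqref{eq:context-probability}, the context-$k$ probability is $Q_{nk}=\tr[E F_{P^n(a_k)}\rho_k F_{P^n(a_k)}^\dagger]$. This quantity depends on $n$ only through the label $P^n(a_k)$, because the compensated seed $\rho_k$ is fixed. The orbit of $a_k$ under $P$ is a cycle of some length $q_k\leq M$, so $n\mapsto Q_{nk}$ is periodic with period $q_k$ and takes values in $[0,1]$.

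By \eqref{eq:probabilities}, $f(n)=\sum_k\eta_k Q_{nk}$ is a convex combination of these curves. Since $S_D$ is linear in $f$, it suffices to prove, for any $g$ with period $q\leq M$ and $|g|\leq1$,
\[
 \Bigl|\tfrac1D\textstyle\sum_{n=0}^{D-1}g(n)\cos(2n\theta)\Bigr|\leq\frac{q}{D|\sin(q\theta)|}\leq\frac{\kappa_M}{D}.
\]
Averaging over $k$ with weights $\eta_k$ then gives \eqref{eq:correlation}.

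\emph{Step 2: residue-class decomposition.} Bound the cosine sum by the modulus of $\sum_n g(n)e^{2in\theta}$. Split $\{0,\ldots,D-1\}$ into the $q$ residue classes $n=c+jq$, with $0\leq c<q$ and $0\leq j<J_c$. On each class $g$ is the constant $g(c)$, so the class contributes
\[
 g(c)\,e^{2ic\theta}\sum_{j=0}^{J_c-1}e^{2ijq\theta}.
\]
The geometric sum has modulus
\[
 \frac{|1-e^{2iJ_cq\theta}|}{|1-e^{2iq\theta}|}\leq\frac{2}{2|\sin(q\theta)|}=\frac1{|\sin(q\theta)|}.
\]
The denominator is nonzero by the irrationality of $\theta/\pi$ established above. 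Summing over the $q$ classes with $|g(c)|\leq1$ gives the bound $q/|\sin(q\theta)|$. Dividing by $D$ and maximising over $1\leq q\leq M$ yields $\kappa_M/D$.

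\emph{Main obstacle.} The only delicate point is Step 1: the periodicity must be established per context, with a period bounded by $M$ and not by the least common multiple of the cycle lengths, since that lcm could exceed $M$. Applying linearity before periodicity avoids this, because each context follows its own cycle with a fixed seed. No phase symmetry or matching structure is used; only stationarity, finiteness of $X$, and boundedness of the readout enter.
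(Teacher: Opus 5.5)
Your proof is correct and follows the paper's argument step for step. You use per-context $q$-periodicity of the readout along a cycle of the block permutation with $q\leq M$, split $\sum_{n<D}g(n)e^{2in\theta}$ into at most $q$ residue-class geometric sums each of modulus at most $1/|\sin(q\theta)|$, then take the real part, divide by $D$, and average with the fixed weights $\eta_k$.
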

\begin{proof}
Each initial context follows a cycle of the repeated-block permutation with length $q\leq M$. Its readout sequence $g(n)\in[0,1]$ is $q$-periodic. Split $\sum_{n<D}g(n)e^{2in\theta}$ into residue classes modulo $q$. The geometric sum in each class has modulus at most $1/|\sin(q\theta)|$, and there are at most $q$ classes. Divide by $D$, take the real part, and average with the fixed preparation weights.
\end{proof}
For $p_n=a(1-\cos(2n\theta))$, geometric sums also give
\begin{equation}
 \left|S_D(p)+\frac a2\right|\leq\frac{C}{D},\qquad
 C=\frac{a}{|\sin\theta|}+\frac{a}{2|\sin(2\theta)|}.
 \label{eq:ideal-correlation}
\end{equation}

\begin{theorem}[A common finite-window contrast]\label{thm:window}
Every controller and fixed prior in the stated finite-label class satisfy
\begin{equation}
 S_D(f-p)\geq\frac a2-\frac{\kappa_M+C}{D}.
 \label{eq:window}
\end{equation}
For any $0<\delta<a/2$, choosing
$D>(\kappa_M+C)/(a/2-\delta)$ certifies
$\max_{0\leq n<D}|f(n)-p_n|>\delta$ for every such model.
\end{theorem}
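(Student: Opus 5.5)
The plan is to combine Lemma~\ref{lem:correlation} with the ideal estimate \eqref{eq:ideal-correlation} through linearity of $S_D$, and then turn the resulting lower bound on a weighted average into a pointwise deviation bound.

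First, $S_D$ is linear in its argument, so $S_D(f-p)=S_D(f)-S_D(p)$. Lemma~\ref{lem:correlation} applies to any probability curve produced by a stationary reversible model with at most $M$ labels and a fixed compensated prior, and gives $S_D(f)\geq-\kappa_M/D$. Rewriting \eqref{eq:ideal-correlation} one-sidedly yields $-S_D(p)\geq a/2-C/D$. Adding the two inequalities gives \eqref{eq:window}.

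Here $\kappa_M$ is finite because $\theta/\pi$ is irrational, so $\sin(q\theta)\neq0$ for every $1\leq q\leq M$. This was established just before the lemma. For completeness I would also check \eqref{eq:ideal-correlation}. Write $p_n\cos(2n\theta)=a\cos(2n\theta)-a\cos^2(2n\theta)$ and use $\cos^2x=\tfrac12(1+\cos 2x)$. The constant term contributes exactly $-a/2$. The remaining terms are averaged geometric sums of $e^{2in\theta}$ and $e^{4in\theta}$, bounded by $1/|\sin\theta|$ and $1/|\sin 2\theta|$ respectively; since $2\theta/\pi$ is irrational, neither denominator vanishes. This reproduces the constant $C$, possibly up to slack, which is harmless because only the stated form is used.

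For the certification clause, I argue by contraposition. Suppose $|f(n)-p_n|\leq\delta$ for all $0\leq n<D$. Since $|\cos(2n\theta)|\leq1$, this gives $S_D(f-p)\leq\frac1D\sum_{n<D}|f(n)-p_n|\leq\delta$. On the other hand, the choice $D>(\kappa_M+C)/(a/2-\delta)$ with $a/2-\delta>0$ means $(\kappa_M+C)/D<a/2-\delta$, so \eqref{eq:window} gives $S_D(f-p)>\delta$. This is a contradiction. The conclusion holds uniformly over all controllers and priors in the class, because the contrast $\cos(2n\theta)$ is fixed in advance and independent of the model.

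The main obstacle is making sure Lemma~\ref{lem:correlation} applies with no hidden assumptions: the compensated prior and the repeated-block permutation must produce a mixture of periodic readouts with periods at most $M$. This holds because each context's label trajectory lies on a cycle of $\pi_{(T,H)}$, the readout is a function of the label and the context seed, and the prior weights are fixed and independent of $n$.
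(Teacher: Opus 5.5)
Your proposal is correct and follows the paper's proof: you combine the lower side of Lemma~\ref{lem:correlation} with the upper side of \eqref{eq:ideal-correlation} using linearity of $S_D$, and you obtain the certification clause from $S_D(f-p)\leq\max_{n<D}|f(n)-p_n|$. Your extra check of \eqref{eq:ideal-correlation}, using $\cos^2x=\tfrac12(1+\cos 2x)$ and the two geometric sums, is also correct and reproduces the constant $C$ exactly.
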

\begin{proof}
Combine the lower side of \eqref{eq:correlation} with the upper side of \eqref{eq:ideal-correlation}. Finally $|S_D(f-p)|\leq\max_{n<D}|f(n)-p_n|$.
\end{proof}
The contrast is common to the complete class, although the depth of greatest discrepancy need not be common. Small denominators $|\sin(q\theta)|$ can make the sufficient window large. No computationally efficient asymptotic onset is inferred from this bound.

There is also a necessary relation between label budget and gate accuracy. Suppose the uniform accumulated error per repeated block is $\eps$. On a frame cycle of length $q\leq M$, the product of the selected physical unitaries is projectively the identity, by telescoping the successive frames. Consequently
\[
 d_{\op}((HT)^q,I)\leq q\eps.
\]
The two eigenvalues of $(HT)^q$ are separated by $2|\sin(q\theta)|$ up to a common phase, so their distance to a common unit scalar cannot be less than $|\sin(q\theta)|$. Therefore
\begin{equation}
 \eps\geq\frac1{\kappa_M}.
 \label{eq:budget-error}
\end{equation}
This necessary condition concerns the evolving frame register. It neither specifies the full accuracy graph nor determines the earliest observable discrepancy. Proposition~\ref{prop:error} and Theorem~\ref{thm:window} describe the competing fixed-depth convergence and eventual finite-model obstruction.

\section{Finite-shot certification and a numerical experiment}\label{sec:sampling}
Exact probabilities are the starting point of certification, but sampled outcomes can be tested against the same model class. Let $m=|\W|$ and let $\widehat p_w$ be the empirical frequency from $N$ independent Bernoulli trials of circuit $w$. Runs reset the preparation according to the same fixed law. Set
\begin{equation}
 r_N=\sqrt{\frac{\log(2m/\alpha)}{2N}},\qquad
 \Q=\{(q_{\pi,\eta}(w))_{w\in\W}:\pi\in\Pi,\eta\in\Pp\}.
 \label{eq:confidence}
\end{equation}
Reject the model class when
\begin{equation}
 \operatorname{dist}_{\infty}(\widehat p,\Q)>r_N.
 \label{eq:test}
\end{equation}
Feasibility of a model within the empirical bands can be tested using Theorem~\ref{thm:flow}, with the band centres replaced by the observed frequencies.

\begin{proposition}[Sampling guarantee]\label{prop:shots}
Under any model in $\Q$, test \eqref{eq:test} rejects with probability at most $\alpha$. If the true probabilities are the ideal vector $p$ and $R(\W)>2r_N$, rejection occurs with probability at least $1-\alpha$.
\end{proposition}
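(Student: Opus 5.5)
The plan is to combine Hoeffding's inequality with a union bound over the $m$ circuits, and then use the triangle inequality for $\operatorname{dist}_\infty$.

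\emph{Concentration.} Suppose the data are generated with true probability vector $u$. Here $u\in\Q$ in the first part and $u=p$ in the second. For each $w\in\W$, the frequency $\widehat p_w$ is an average of $N$ independent Bernoulli$(u_w)$ trials. The runs reset the preparation law, so the trials within each circuit are i.i.d. Hoeffding's inequality then gives
\[
 \Pr\bigl(|\widehat p_w-u_w|>r_N\bigr)\leq 2e^{-2Nr_N^2}=\frac{\alpha}{m}.
\]
A union bound over the $m$ circuits shows that the good event $\|\widehat p-u\|_\infty\leq r_N$ has probability at least $1-\alpha$. This holds without any independence assumption across circuits, which matters because the shared model couples them only through their means.

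\emph{Level.} If $u\in\Q$, then on the good event $\operatorname{dist}_\infty(\widehat p,\Q)\leq\|\widehat p-u\|_\infty\leq r_N$. Hence the test \eqref{eq:test} does not reject, and the rejection probability is at most $\alpha$.

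\emph{Power.} Take $u=p$. By definition \eqref{eq:radius}, $R(\W)=\min_{q\in\Q}\|q-p\|_\infty=\operatorname{dist}_\infty(p,\Q)$. The minimum is attained because $\Q$ is a finite union of continuous images of the compact set $\Pp$, so $\Q$ is compact. On the good event, every $q\in\Q$ satisfies
\[
 \|\widehat p-q\|_\infty\geq\|p-q\|_\infty-\|\widehat p-p\|_\infty\geq R(\W)-r_N>2r_N-r_N=r_N.
\]
Taking the minimum over compact $\Q$ keeps the inequality strict, so $\operatorname{dist}_\infty(\widehat p,\Q)>r_N$. Therefore rejection occurs with probability at least $1-\alpha$.

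The only point needing care is strictness: it uses attainment of the minimum over $\Q$, and that attainment follows from compactness as above. Beyond this the argument is routine.
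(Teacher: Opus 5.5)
Your proof is correct and follows the paper's argument: Hoeffding's inequality with a union bound gives the confidence event $\|\widehat p-u\|_\infty\le r_N$ with probability at least $1-\alpha$, membership of the true vector in $\Q$ then bounds the false-rejection probability, and the reverse triangle inequality gives $\operatorname{dist}_\infty(\widehat p,\Q)\ge R(\W)-r_N>r_N$ for the power claim. One small remark: you do not need attainment of the minimum over $\Q$ for strictness, because the bound $\|\widehat p-q\|_\infty\ge R(\W)-r_N$ holds uniformly over $q\in\Q$, so it passes to the infimum, and $R(\W)-r_N>r_N$ is already strict.
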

\begin{proof}
Hoeffding's inequality~\cite{Hoeffding1963} and a union bound give
$\Pr(\|\widehat p-q\|_\infty>r_N)\leq2m e^{-2Nr_N^2}=\alpha$ for any true vector $q$. If $q\in\Q$, its distance bounds the distance to the class. If the true vector is $p$, the reverse triangle inequality gives
$\operatorname{dist}_\infty(\widehat p,\Q)\geq R(\W)-r_N>r_N$
on the same confidence event.
\end{proof}
For unequal shot counts one can instead use separate radii and error allocations per circuit. The selected collection and decision rule should be fixed before collecting the data used for the stated confidence level, or their adaptation must be accounted for statistically. Independent numerical design followed by fresh validation samples meets this requirement.

If actual quantum preparation, control and readout shift the true vector by at most $\nu$ in maximum norm, the separation from $\Q$ is at least $R(\W)-\nu$. Thus a sufficient condition becomes $R(\W)>\nu+2r_N$. Uncertainty in the definition of the finite-model class must instead be handled by enlarging that class. A nominal certificate is not automatically robust to either kind of uncertainty.

\subsection{Reproducible sampling study}
We applied \eqref{eq:test} to the two-circuit four-frame example, using the exact quantum vector and each of the two allowed table vectors as three separate data-generating laws. For each law and shot count $N\in\{25,50,100,200,400,800\}$, we generated 20,000 independent simulated data sets, with fixed random seed 20261003 and $\alpha=0.05$. The null class consists of the two stationary tables, not their between-run mixtures. Figure~\ref{fig:sampling} reports the resulting rejection frequencies.

At 25 and 50 shots per circuit, the observed quantum rejection frequencies were $0.71805$ and $0.99285$, respectively; at 100 shots they were $1.00000$ in the finite sample. Across all settings, the largest observed null rejection frequency was $0.0048$.

These are synthetic Bernoulli experiments. The quantum input probabilities are calculated from the gates, and the finite-model probabilities are \eqref{eq:four-vectors}; no laboratory observations are used or fitted. The study illustrates the finite-shot cost of an already proved separation. Its empirical frequencies do not replace the uniform confidence guarantee in Proposition~\ref{prop:shots}. With 20,000 repetitions, the Monte Carlo standard error of any displayed frequency is at most $1/(2\sqrt{20000})\simeq0.00354$.

\begin{figure}[tbp]
\centering\includegraphics[width=.83\linewidth]{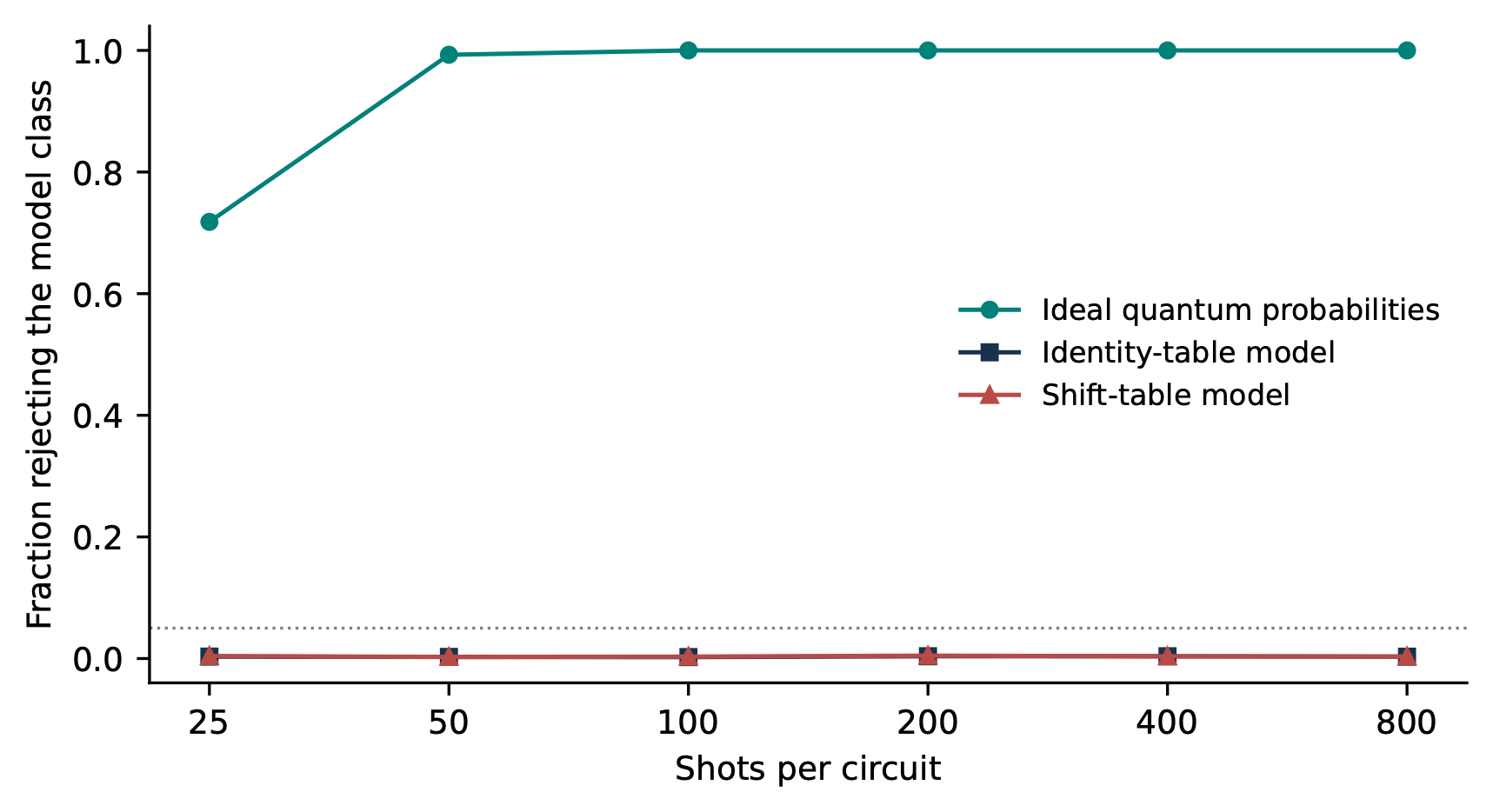}
\caption{Finite-shot numerical experiment for the four-frame example. Each point uses 20,000 independently sampled data sets at the indicated number of shots per circuit. The dotted reference is the nominal false-rejection bound $\alpha=0.05$. The two null curves overlap closely near zero. All data are simulated; the confidence guarantee is analytic.}
\label{fig:sampling}
\end{figure}

\section{Scope and conclusions}
The simultaneous circuit radius makes the shared nature of a control model explicit. The four-frame example proves that individually successful fits can be incompatible with any one stationary reversible table. The 464-frame example shows that a substantially larger collection may still admit a common fit: all forward $H,T$ words through twelve gates pass for one controller, while a repeated-block family has a certified optimal horizon of 192 blocks at threshold $1/3$. These statements have different quantifiers and different circuit resources.

The numerical endpoints depend on the specified catalogue, exact phase command, Hadamard error allowance, calibration anchors and right-phase symmetry. Removing the symmetry enlarges the class and can invalidate the exclusion part of Theorem~\ref{thm:horizon}; its explicit witness remains a witness in any larger class that contains it. Adding global gate identities, changing the readout rule, allowing stochastic transitions, or redrawing the transition table between runs also changes the certification problem. We make no claim that the four-frame or 464-frame class exhausts physically motivated finite models.

The controller register is only one resource. The definition of the catalogue, table storage, preparation context, probability precision, external command information and permitted readout operations must be counted before interpreting these results as physical memory bounds. The finite-window obstruction assumes stationary permutations and a fixed readout; it does not apply unchanged to arbitrary stochastic finite-state models or externally time-dependent controllers.

The present results provide a self-contained finite decision problem, exact existence and exclusion certificates, and a sampling rule. A natural next task is to optimise circuit collections jointly over maximum word length, number of settings, sampling cost and robustness to enlarged controller classes. Such optimisation could yield shorter certificates, but the current results do not establish an improvement over the 192-block benchmark for the full class studied here. The distinction between proved finite instances and unresolved optimisation is essential to the interpretation of the method.

\section{Verification and reproducibility}\label{sec:verification}
The manuscript \href{https://zenodo.org/records/23131105}{DOI}
and the companion code repository on
\href{https://github.com/njblottly/finite-reversible-control-source}{GitHub}
are available online.

\appendix
\section{Catalogue ordering and exact verification}\label{app:labels}
The 464-frame ordering is: eight $t=0$ endpoints, eight $t=1$ endpoints, then $(n,b,g)$ lexicographically for $1\leq n\leq7$ and $0\leq b,g<8$, with $\beta=b\pi/4$ and $\gamma=g\pi/4$. Thus $I$ is frame 12 and $H$ is frame 208. Right-phase orbit representatives have original indices
\[
 (0,8),\quad\text{followed by}\quad
 \{16+64u+g:0\leq u\leq6,\;0\leq g<8\}.
\]
The representative of orbit 1 is $Z$; $I$ differs by a right phase. Since the input is $\ket0$ and the tables are equivariant, either represents the same probability initialisation.

For the repeated-block quotient, the exact phase table has orbit action
\[
 t(0)=0,\quad t(1)=1,\quad
 t(2+8u+g)=2+8u+(g+1\bmod8).
\]
For an arbitrary chronological word, the quotient Hadamard action is $h=\sigma\circ t^{-1}$. These rules, the shift on each edge, and the full quotient witness determine all 464 transitions without an implicit numerical convention.

The arithmetic implementation represents the field
$\mathbb Q(\sqrt2,\sqrt3,\sqrt5,\sqrt7,i)$ in its 32-element product basis, with rational coefficients. Products reduce repeated square roots to integers and use $i^2=-1$. Conjugation changes the sign of the imaginary basis components. For a real expression, signs are certified by rational bounds on its positive square roots. With denominator $2^b$, integer square root gives
\[
 \frac{\lfloor\sqrt{m2^{2b}}\rfloor}{2^b}
 \leq\sqrt m\leq
 \frac{\lfloor\sqrt{m2^{2b}}\rfloor+1}{2^b},
\]
with equality handled exactly for perfect squares. Coefficient signs determine which endpoints contribute to the lower and upper sums. The precision is increased when necessary; unresolved signs raise an error rather than silently choosing a branch. Algebraic zero is detected in the product basis.

The geometry verifier classifies every representative-source/target pair using \eqref{eq:exact-edge}. The horizon verifier rebuilds the witness matching, checks its lift, and exhausts every permitted history to depth 193 with exact probability bands. A separate enumeration checks every word of length at most twelve for the short-word witness. The search uses a standard augmenting-path matching algorithm over integer labels. It never treats a floating-point optimisation status as an exclusion certificate.

\begin{table}[tbp]
\caption{Scope of the reproduced finite computations. Search counts are implementation-specific. The short-word result certifies feasibility, while the repeated-block result includes exhaustive exclusion at the next depth.}
\label{tab:verification}
\begin{ruledtabular}
\begin{tabular}{lr}
Verified object & Count or value\\
\colrule
Distinct projective catalogue frames &464\\
Right-phase orbits &58\\
Exact geometric comparisons &26,912\\
Calibrated quotient edges &174\\
Largest feasible repeated-block horizon at $1/3$ &192\\
Exhaustive search nodes in the supplied verifier &41,595\\
Forward $H,T$ words checked through length twelve &8,190\\
Sampling repetitions per law and shot count &20,000\\
\end{tabular}
\end{ruledtabular}
\end{table}

\FloatBarrier
\section{Existence of reversible approximations}\label{app:matching}
The framework assumes a nonempty controller class. A classical matching construction explains why accurate reversible tables can exist on appropriately chosen refining catalogues. The statement below is included to separate that existence question from the certification problem.

\begin{lemma}[Equal-measure matching construction]
Let a compact group have a bi-invariant metric and normalised Haar measure. Suppose a measurable partition $P_1,\ldots,P_M$, modulo null sets, has equal measures $1/M$ and centres $F_i$ satisfying $d(x,F_i)\leq r$ almost everywhere in $P_i$. For every left translation $U$, there is a permutation $\pi$ such that
\[
 \max_i d(F_{\pi(i)},UF_i)\leq2r.
\]
\end{lemma}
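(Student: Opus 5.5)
The plan is to build a bipartite graph between the index set $\{1,\ldots,M\}$ on the left and on the right, and to show that it contains a perfect matching by Hall's theorem~\cite{Hall1935}. Put an edge $(i,j)$ whenever the translated cell $UP_i$ meets $P_j$ in positive measure. First I will check that every such edge obeys the claimed bound. If $\mu(UP_i\cap P_j)>0$, there is a point $y$ in that intersection outside the exceptional null sets. Then $d(y,F_j)\leq r$. Writing $y=Ux$ with $x\in P_i$, bi-invariance gives $d(y,UF_i)=d(Ux,UF_i)=d(x,F_i)\leq r$. The triangle inequality then gives $d(F_j,UF_i)\leq2r$. Handling the null sets needs a little care: the set of $x\in P_i$ with $d(x,F_i)>r$ is null, and left translation preserves Haar measure, so its image under $U$ is also null. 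Removing both null sets still leaves a point of positive-measure intersection.

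Next I will verify Hall's condition. Fix a set $S$ of left indices and let $N(S)$ be its set of graph neighbours. Left translation preserves Haar measure, so $\mu\bigl(\bigcup_{i\in S}UP_i\bigr)=|S|/M$. The translated cells form a partition modulo null sets, so this union is covered, up to a null set, by the cells $P_j$ with $j\in N(S)$. Indeed, if $j\notin N(S)$, then $P_j$ meets the union in measure zero, and the $P_j$ cover the group up to a null set. Hence $|S|/M\leq|N(S)|/M$, that is, $|N(S)|\geq|S|$. Hall's theorem then gives a perfect matching, which is a permutation $\pi$ with each pair $(i,\pi(i))$ an edge. The first step then yields $\max_i d(F_{\pi(i)},UF_i)\leq2r$.

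I expect the only delicate point to be measure-theoretic bookkeeping rather than combinatorics. The bound $d(x,F_i)\leq r$ holds only almost everywhere, and the partition holds only modulo null sets. So the edge definition must use positive-measure intersections, never mere nonempty intersections. Both the edge estimate and the covering estimate then go through with finitely many null sets discarded. Bi-invariance enters in two places: left-invariance of the Haar measure, which preserves the cell masses, and left-invariance of the metric, which moves the centre $F_i$ to $UF_i$ with the same radius. Right-invariance is not strictly needed for this argument.
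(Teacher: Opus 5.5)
Your proposal is correct and follows the paper's proof: the same positive-measure intersection graph, Hall's condition obtained from left-invariance of Haar measure and the covering of the translated union by its neighbouring cells, and the $2r$ bound from a point of each matched intersection chosen off the exceptional null sets. Beyond the paper's terse version, you spell out the null-set bookkeeping and the step $d(Ux,UF_i)=d(x,F_i)$, and your remark that only left-invariance of the metric is needed is accurate.
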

\begin{proof}
Join $i$ to $j$ if $UP_i\cap P_j$ has positive measure. The union of cells indexed by any input set $I$ has translated measure $|I|/M$ and is contained, modulo null sets, in its neighbouring output cells. Therefore the neighbour set has at least $|I|$ elements. Hall's theorem gives a perfect matching. A point in each matched intersection, away from exceptional null sets, bounds the centre distance by $2r$.
\end{proof}
This is an application of the established permutation-approximation method~\cite{Lax1971,Hall1935}, not a claim that arbitrary pointwise rounding is reversible. It also does not guarantee every extra calibration or symmetry condition on a chosen finite catalogue.

One exact calibration edge can be imposed at a controlled cost. If $UF_{i_0}$ is a catalogue frame and a permutation approximates $U$ with error $b$, swap the image of $i_0$ with that of the row currently mapped to $UF_{i_0}$. The new anchor is exact, and the triangle inequality bounds the changed second row by $2b$. Repeating for consistent distinct anchors preserves earlier anchors, with the corresponding repeated factor in this elementary bound. This supplies an existence mechanism; the sharp finite tests in the main text use the actual graph rather than this coarse bound.

\bibliography{references}

@misc{Palmer2025,
  author        = {Palmer, T.},
  title         = {Rational Quantum Mechanics: Testing Quantum Theory with Quantum Computers},
  year          = {2026},
  eprint        = {2510.02877},
  archivePrefix = {arXiv},
  note          = {arXiv:2510.02877v3},
  url           = {https://arxiv.org/abs/2510.02877v3}
}

@misc{Palmer2026,
  author        = {Palmer, T.},
  title         = {Solving the Mysteries of Quantum Mechanics: Why Nature Abhors a Continuum},
  year          = {2026},
  eprint        = {2602.16382},
  archivePrefix = {arXiv},
  note          = {arXiv:2602.16382v1},
  url           = {https://arxiv.org/abs/2602.16382v1}
}

@article{Lax1971,
  author  = {Lax, P. D.},
  title   = {Approximation of measure preserving transformations},
  journal = {Communications on Pure and Applied Mathematics},
  volume  = {24},
  pages   = {133--135},
  year    = {1971},
  doi     = {10.1002/cpa.3160240204}
}

@article{Hall1935,
  author  = {Hall, P.},
  title   = {On Representatives of Subsets},
  journal = {Journal of the London Mathematical Society},
  volume  = {s1-10},
  pages   = {26--30},
  year    = {1935},
  doi     = {10.1112/jlms/s1-10.37.26}
}

@article{Kleinmann2011,
  author  = {Kleinmann, M. and G{\"u}hne, O. and Portillo, J. R. and Larsson, J.-{\AA}. and Cabello, A.},
  title   = {Memory cost of quantum contextuality},
  journal = {New Journal of Physics},
  volume  = {13},
  pages   = {113011},
  year    = {2011},
  doi     = {10.1088/1367-2630/13/11/113011}
}

@misc{Trandafir2025,
  author        = {Trandafir, S. and Kelleher, C. and Cabello, A.},
  title         = {Memory cost of quantum contextuality with {Pauli} observables},
  year          = {2025},
  eprint        = {2506.06869},
  archivePrefix = {arXiv},
  note          = {arXiv:2506.06869},
  url           = {https://arxiv.org/abs/2506.06869}
}

@article{Nielsen2021,
  author  = {Nielsen, E. and Gamble, J. K. and Rudinger, K. and Scholten, T. and Young, K. and Blume-Kohout, R.},
  title   = {Gate Set Tomography},
  journal = {Quantum},
  volume  = {5},
  pages   = {557},
  year    = {2021},
  doi     = {10.22331/q-2021-10-05-557}
}

@misc{Rudinger2023,
  author        = {Rudinger, K. M. and Ostrove, C. I. and Seritan, S. K. and Grace, M. D. and Nielsen, E. and Blume-Kohout, R. J. and Young, K. C.},
  title         = {Two-Qubit Gate Set Tomography with Fewer Circuits},
  year          = {2023},
  eprint        = {2307.15767},
  archivePrefix = {arXiv},
  note          = {arXiv:2307.15767v2},
  url           = {https://arxiv.org/abs/2307.15767v2}
}

@article{Spekkens2005,
  author  = {Spekkens, R. W.},
  title   = {Contextuality for preparations, transformations, and unsharp measurements},
  journal = {Physical Review A},
  volume  = {71},
  pages   = {052108},
  year    = {2005},
  doi     = {10.1103/PhysRevA.71.052108}
}

@unpublished{BruzzeseArchive,
  author = {Bruzzese, J.},
  title  = {Gate Composition and Admissible Control on Rational Quantum Grids: Towards discrete dynamics for {RaQM}},
  year   = {2026},
  note   = {Consolidated research archive, version 0.18 (15 September 2026). The version used here is the author's supplied manuscript. Earlier project archive: \url{https://doi.org/10.5281/zenodo.22313144}}
}

@article{Bennett1973,
  author  = {Bennett, C. H.},
  title   = {Logical Reversibility of Computation},
  journal = {IBM Journal of Research and Development},
  volume  = {17},
  pages   = {525--532},
  year    = {1973},
  doi     = {10.1147/rd.176.0525}
}

@article{Pollock2018,
  author  = {Pollock, F. A. and Rodr{\'i}guez-Rosario, C. and Frauenheim, T. and Paternostro, M. and Modi, K.},
  title   = {Operational {Markov} Condition for Quantum Processes},
  journal = {Physical Review Letters},
  volume  = {120},
  pages   = {040405},
  year    = {2018},
  doi     = {10.1103/PhysRevLett.120.040405}
}

@book{Watrous2018,
  author    = {Watrous, J.},
  title     = {The Theory of Quantum Information},
  publisher = {Cambridge University Press},
  year      = {2018},
  note      = {Author's manuscript and errata: \url{https://cs.uwaterloo.ca/~watrous/TQI/}},
  url       = {https://cs.uwaterloo.ca/~watrous/TQI/}
}

@article{Sion1958,
  author  = {Sion, M.},
  title   = {On general minimax theorems},
  journal = {Pacific Journal of Mathematics},
  volume  = {8},
  pages   = {171--176},
  year    = {1958},
  doi     = {10.2140/pjm.1958.8.171}
}

@article{HopcroftKarp1973,
  author  = {Hopcroft, J. E. and Karp, R. M.},
  title   = {An {$n^{5/2}$} Algorithm for Maximum Matchings in Bipartite Graphs},
  journal = {SIAM Journal on Computing},
  volume  = {2},
  pages   = {225--231},
  year    = {1973},
  doi     = {10.1137/0202019}
}

@article{Hoeffding1963,
  author  = {Hoeffding, W.},
  title   = {Probability Inequalities for Sums of Bounded Random Variables},
  journal = {Journal of the American Statistical Association},
  volume  = {58},
  pages   = {13--30},
  year    = {1963},
  doi     = {10.1080/01621459.1963.10500830}
}
\end{document}